\documentclass[12pt]{article}

\usepackage{header}

\title{\Large{ {\bf Obviously Strategy-proof Choice of Social Acts}}}

\author{Abinash Panda\thanks{Department of Economics, Shiv Nadar Institution of Eminence, NH - 91, Gautam Buddha Nagar, Uttar Pradesh, India - 201314. Email: \texttt{ap280@snu.edu.in}} \  and Anup Pramanik\thanks{Department of Economics, Shiv Nadar Institution of Eminence, NH - 91, Gautam Buddha Nagar, Uttar Pradesh, India - 201314. Email: \texttt{anup.pramanik@snu.edu.in}.}}

\begin{document}
\maketitle
\begin{abstract}
We study obviously strategy-proof implementation in the framework of social choice over acts introduced by \cite{bahel2020strategyproof}. We characterize the class of unanimous social choice functions that are implementable via obviously strategy-proof mechanisms. Our main result shows that a unanimous social choice function is obviously strategy-proof implementable if and only if it is dictatorial. 
\end{abstract}

\noindent {\sc Keywords}: social choice functions; social acts;  obvious strategy-proofness; dictatorship\\

\noindent {\sc JEL Classification}: D71; D81; D82

\newpage
\section{Introduction}
We consider the model of social choice over social acts introduced by \cite{bahel2020strategyproof}. The framework studies collective decision making under uncertainty, where a social planner must choose a contingent social plan before the uncertainty is resolved. Formally, a social act specifies a social outcome in every possible state of nature, and agents evaluate such acts according to the subjective expected utility criterion. This environment provides a natural model for collective decisions in which uncertainty is subjective and agents may disagree not only about the desirability of outcomes, but also about the likelihood of states of nature. Examples include policy decisions, investment choices, and other collective choices made under uncertainty. 
Within this framework, \cite{bahel2020strategyproof} characterize the class of unanimous and strategy-proof social choice functions (SCFs). Their result shows that strategy-proofness and unanimity admit a rich class of rules, namely the locally bilateral top-selection SCFs. Thus, unlike many classical social choice environments where incentive compatibility leads to severe restrictions, the framework of social choice over social acts allows substantial flexibility under strategy-proofness.

Although strategy-proofness guarantees that truthful reporting is optimal, agents may nevertheless find it difficult to recognize that truthful behavior is indeed optimal, since doing so may require contingent reasoning. One formalization of what it means for participants to ``understand'' that a mechanism is strategy-proof is the concept of obvious strategy-proofness due to \cite{li2017obviously}. A mechanism (an extensive form game) is obviously strategy-proof (OSP) if every agent has an obviously dominant strategy. Informally, even the worst possible outcome from following this strategy is at least as good as the best possible outcome from any deviation. An SCF is OSP implementable if there exists an OSP mechanism that implements it.

In this paper, we consider the class of unanimous and strategy-proof SCFs characterized by \cite{bahel2020strategyproof} and ask which of these SCFs are OSP implementable. Our main result shows that the answer is highly restrictive. Although strategy-proofness and unanimity admit a rich class of SCFs in the framework of social choice over social acts, imposing obvious strategy-proofness leaves only dictatorial rules. Thus, the flexibility permitted by strategy-proofness collapses sharply under obvious strategy-proofness.

Our paper contributes to the growing literature on obvious strategy-proofness. Existing studies indicate that the implications of obvious strategy-proofness depend sharply on the economic environment. In voting environments with single-peaked domains, \cite{bade2016gibbard} and \cite{arribillaga2020obvious} characterize onto and OSP implementable SCFs. Since the class of strategy-proof and onto social choice functions on this domain coincides with the class of generalized median voter schemes (\cite{moulin1980strategy}), their characterization identifies an additional condition, namely the increasing intersection property, that generalized median voter schemes must satisfy in order to be OSP implementable. In division problems with single-peaked preferences, \cite{arribillaga2023all} show that all sequential allotment rules, a large subfamily of strategy-proof and efficient rules, are OSP implementable. By contrast, in assignment and matching environments, OSP is often highly restrictive. In the house allocation model of \cite{shapley1974cores}, where agents are initially endowed with indivisible objects but may prefer the objects owned by others, \cite{li2017obviously} shows that the classic top trading cycles mechanism is not obviously strategy-proof. \cite{troyan2019obviously} studies a more general class of top trading cycles mechanisms in which each object is associated with a strict priority ordering over agents, and characterizes OSP implementability through a condition on the priority structure called weak acyclicity.\footnote{See \cite{mandal2022obviously} for its generalization.} \cite{ashlagi2018stable} show that the deferred acceptance algorithm in the marriage model is not obviously strategy-proof for agents on the proposing side. They show, however, that the deferred acceptance algorithm becomes obviously strategy-proof for the proposing side on the restricted domain of acyclic preferences introduced by \cite{ergin2002efficient}. Our result contributes to this literature within the framework of social choice over social acts introduced by \cite{bahel2020strategyproof}. We show that OSP implementability is highly restrictive in this environment. In particular, within the class of unanimous and strategy-proof SCFs characterized by \cite{bahel2020strategyproof}, only dictatorial SCFs admit OSP implementations.

The remainder of the paper is organized as follows. Section \ref{TM} formally introduces the model. Section \ref{OSP} presents the notion of obvious strategy-proofness. Section \ref{TMR} contains the main result, and Section \ref{CN} concludes.

\section{The Model}\label{TM}
We follow the framework of \cite{bahel2020strategyproof}. There is a finite set of agents $N=\{1,\dots,n\}$ with $n\geq 2$. Uncertainty is represented by a finite set of states of nature $\Omega$, with $|\Omega|\geq 2$. Any subset $\Omega' \subseteq \Omega$ is called an event. The set of social outcomes is denoted by $X$, where $|X|\geq 3$. A (social) act is a function $f:\Omega \to X$, and we denote by $X^{\Omega}$ the set of all acts.

Agents have preferences over acts. Preferences are assumed to satisfy subjective expected utility (SEU) criterion. Thus, for each agent $i \in N$, there exist a valuation function 
$v_i:X \to \mathbb{R}$ and a probability measure 
$p_i:2^{\Omega} \to [0,1]$ such that for all $f,g \in X^{\Omega}$,
\[
f \succeq_i g \iff E_{p_i}^{v_i}(f) \ge E_{p_i}^{v_i}(g),
\]
where $E_{p_i}^{v_i}(h)= \sum_{\omega \in \Omega} p_i(\omega)\, v_i(h(\omega))$ for any act $h$.

A preference of agent $i$ is identified with its SEU representation $(v_i,p_i)$. Preferences are assumed to be strict, which implies that both the valuation function $v_i$ and the probability measure $p_i$ are injective. Without loss of generality, valuations are normalized so that $\max_{x \in X} v_i(x)=1$ and $\min_{x \in X} v_i(x)=0$. Let $\tau(v_i)$ denote the unique maximizer of $v_i$.

Let $\mathcal{V}$ denote the set of normalized injective valuation functions and $\mathcal{P}$ the set of injective probability measures. The domain of strict SEU preferences is
\[
\mathcal{D} = \left\{ (v_i,p_i) \in \mathcal{V} \times \mathcal{P} \;:\; E_{p_i}^{v_i}(f) \neq E_{p_i}^{v_i}(g) \text{ for all } f \neq g \right\}.
\]
A preference profile is denoted by $(v,p)=((v_1,p_1),\dots,(v_n,p_n))$, and $(v_{-i},p_{-i})$ denotes the profile of all agents except $i$.

\begin{defn}
A social choice function (SCF) is a mapping $\varphi:\mathcal{D}^n \to X^{\Omega}$.
\end{defn}

\begin{defn}
An SCF $\varphi$ is \emph{strategy-proof} if for all $i \in N$, all $(v,p) \in \mathcal{D}^n$, and all $(v_i',p_i') \in \mathcal{D}$,
\[
E_{p_i}^{v_i}(\varphi(v,p)) \ge E_{p_i}^{v_i}\big(\varphi((v_i',p_i'),(v_{-i},p_{-i}))\big).
\]
\end{defn}

\begin{defn}
An SCF $\varphi$ is \emph{unanimous} if for all $(v,p) \in \mathcal{D}^n$ and all $f \in X^{\Omega}$,
\[
\left[ E_{p_i}^{v_i}(f) \ge E_{p_i}^{v_i}(g) \ \text{for all } g \in X^{\Omega} \text{ and all } i \in N \right]
\implies \varphi(v,p)=f.
\]
\end{defn}

\cite{bahel2020strategyproof} characterize the class of strategy-proof and unanimous SCFs on the domain $\mathcal{D}$. The relevant class of rules is described next.

A probability measure $p_i$ is called a \emph{belief} if it is injective. Define an assignment (of states to agents) to be an $n$-component partition $\mathcal{A}=(A_1,\ldots,A_n)$ of the set of states $\Omega$, and denote the set of all assignments by $\mathcal{S}$. An assignment rule is a mapping $s:\mathcal{P}^n \to \mathcal{S}$, where $s(p)=(s_1(p),\ldots,s_n(p))$. Here, $s_i(p)$ is the event assigned to agent $i$ at the belief profile $p$. For any preference profile $(v,p)\in \mathcal{D}^n$ and any $\omega \in \Omega$, let $\varphi(v,p;\omega)$ denote the outcome of the act $\varphi(v,p)$ at state $\omega$.

\begin{defn}\rm 
An SCF $\varphi$ is called top-selection if there exists a unique assignment rule $s:\mathcal{P}^n\rightarrow \mathcal{S}$ such that for all $(v,p)\in \mathcal{D}^n$, for all $\omega \in \Omega$, and for all $i\in N$, we have   $$\omega \in s_i(p) \implies \varphi(v,p;\omega )=\tau(v_i).$$ 
In this case, the assignment rule $s$ is said to be associated with $\varphi$.
\end{defn}

For any nonempty event $\Omega' \subseteq \Omega$, define a belief on $\Omega'$ to be an injective probability measure on $2^{\Omega'}$, and denote the set of all such beliefs by $\mathcal{P}(\Omega')$. An assignment of $\Omega'$ is an $n$-component partition of $\Omega'$, and let $\mathcal{S}(\Omega')$ denote the set of all such assignments. An $\Omega'$-assignment rule is a mapping $s(\Omega'):\mathcal{P}(\Omega')^n \to \mathcal{S}(\Omega')$.

\begin{defn}
An $\Omega'$-assignment rule $s(\Omega'):\mathcal{P}(\Omega')^n \to \mathcal{S}(\Omega')$ is a \emph{constant rule} if there exists an $n$-component partition $\mathcal{A}$ of $\Omega'$ such that for all $p \in \mathcal{P}(\Omega')^n$, $s(\Omega')(p)=\mathcal{A}$.
\end{defn}

\begin{defn}
An $\Omega'$-assignment rule $s(\Omega'):\mathcal{P}(\Omega')^n \to \mathcal{S}(\Omega')$ is \emph{$(i,j)$-dictatorial} if there exists a proper covering\footnote{A proper covering of an event $\Omega'\subseteq \Omega$ is a collection $\{\Omega_k'\}_{k\in I}$ such that (i) $\bigcup_{k\in I} \Omega_k'=\Omega'$, (ii) $\Omega_k' \not\subset \Omega_\ell'$ for all $k \neq \ell$, and (iii) $\bigcap_{k\in I} \Omega_k'=\emptyset$.} $\mathbb{A}$ of $\Omega'$ such that for all $p \in \mathcal{P}(\Omega')^n$,
\[
\big(s_i(\Omega')(p),\, s_j(\Omega')(p)\big)
=
\big(\arg\max_{A \in \mathbb{A}} p_i(A),\, \Omega' \setminus \arg\max_{A \in \mathbb{A}} p_i(A)\big).
\]
An $\Omega'$-assignment rule is \emph{bilaterally dictatorial} if it is $(i,j)$-dictatorial for some (unique) ordered pair $(i,j)$.
\end{defn}

\begin{defn}
An $\Omega'$-assignment rule $s(\Omega'):\mathcal{P}(\Omega')^n \to \mathcal{S}(\Omega')$ is \emph{$(i,j)$-consensual} (with default $A$ such that 
$A\subset \Omega'$ and $A\neq \emptyset$) if for all $p\in \mathcal{P}(\Omega')^n$,
\[
\big(s_i(\Omega')(p),s_j(\Omega')(p)\big)
=
\begin{cases}
(\Omega'\setminus A,A), 
& \text{if } p_i(\Omega'\setminus A)>p_i(A)
\text{ and } p_j(A)>p_j(\Omega'\setminus A),\\
(A,\Omega'\setminus A), 
& \text{otherwise}.
\end{cases}
\]
An $\Omega'$-assignment rule is \emph{bilaterally consensual} if it is $(i,j)$-consensual for some pair of agents $(i,j)$.
\end{defn}

For any $p_i \in \mathcal{P}$ and any nonempty event $\Omega' \subseteq \Omega$, let $p_i|\Omega'$ denote the conditional belief of $p_i$ on $\Omega'$, defined by
\[
(p_i|\Omega')(A)=\frac{p_i(A)}{p_i(\Omega')} \quad \text{for all } A \subseteq \Omega'.
\]
Since $p_i$ is injective, $p_i|\Omega'$ is also injective.

\begin{defn}
An assignment rule $s:\mathcal{P}^n \to \mathcal{S}$ is \emph{locally bilateral} if there is a partition $\{\Omega^t\}_{t=1}^T$ of $\Omega$ and, for each $t=1,\ldots,T$, a constant, bilaterally dictatorial, or bilaterally consensual $\Omega^t$-assignment rule $s^t:\mathcal{P}(\Omega^t)^n \to \mathcal{S}(\Omega^t)$ such that
\[
s_i(p)=\bigcup_{t=1}^T s_i^t(p|\Omega^t)
\quad \text{for all } p \in \mathcal{P}^n \text{ and all } i \in N.
\]
\end{defn}

\begin{defn}
An SCF is a \emph{locally bilateral top selection} if it is a top-selection SCF whose associated assignment rule is locally bilateral.
\end{defn}

We recall the main characterization result of \cite{bahel2020strategyproof}.

\bigskip

\noindent\textbf{Theorem (Bahel and Sprumont (2020)).}
An SCF $\varphi$ is unanimous and strategy-proof if and only if it is a locally bilateral top selection.

\begin{defn}
An SCF $\varphi$ is \emph{dictatorial} if there exists an agent $i \in N$ such that for all $(v,p)\in \mathcal{D}^n$ and all $\omega \in \Omega$,
\[
\varphi(v,p;\omega)=\tau(v_i).
\]
The agent $i$ is called the \emph{dictator} in $\varphi$.
\end{defn}

Dictatorial SCFs belong to the class of locally bilateral top selections. In particular, the associated assignment rule assigns the entire state space to the dictator at every belief profile.

Our main result shows that every unanimous and obviously strategy-proof implementable SCF is dictatorial. We introduce the notion of obvious strategy-proofness in the next section.

\section{Obvious Strategy-Proofness}\label{OSP}
Following the notation of \cite{troyan2019obviously}, we now define obviously strategy-proof (OSP) implementable SCFs (\cite{li2017obviously}). We begin by describing mechanisms as extensive form games, which are used to define OSP implementability

\begin{defn}\label{Mechanism}\rm A mechanism is a tuple $G= <R,h,\iota,\alpha>$, where
\begin{itemize}
\item[1.] $R$ is a rooted tree where  
		
\begin{itemize}
\item[1.1] $r$ denotes the root node of $R$
\item[1.2] $L(R)$ denotes the set of leaves (end nodes) of $R$
\item[1.3] $V(R)$ denotes the set of internal nodes (vertices) of $R$
\item[1.4] For $\rm v\in V(R)$, $E(\rm v)$ denotes the set of outgoing edges from $\rm v$ 
\item[1.5] $\bar{E}(R)=\cup_{\rm v\in V(R)}E(\rm v)$ denotes the set of edges of $R$
\item[1.6] For $e\in \bar{E}(R)$, $\rho(e)\in V(R)$ denotes the source node of $e$
\end{itemize}
\item[2.] A function $h:L(R)\rightarrow X^{\Omega}$ from the leaves of $R$ to social acts
		 
\item[3.] A function $\iota:V(R)\rightarrow N$ describing which agent is to take an action at each internal node
		
\item[4.] A function $\alpha:\bar{E}(R)\rightarrow 2^{\mathcal{D}}\setminus\{\emptyset\}$ such that:
\begin{itemize}
\item[4.1] For all $e,e'\in \bar{E}(R)$ such that $\rho(e)=\rho(e')$, $\alpha(e)\cap \alpha(e')=\emptyset$
\item[4.2] For any $\rm v\in V(R)$, $\cup_{e\in E(\rm v)}\alpha(e)=\alpha(e')$, where $e'$ is the most recent edge along a path from $r$ to $\rm v$ such that $\iota(\rho(e'))=\iota(\rm v)$, or, if no such edge exists then $\cup_{e\in E(\rm v)}\alpha(e)=\mathcal{D}$ 
\end{itemize}
\end{itemize}
\end{defn}
	
Let $G$ be a mechanism as defined in Definition \ref{Mechanism}. For any preference profile $(v,p)\in \mathcal{D}^n$, the mechanism induces a unique path from the root $r$ to a terminal node. At each internal node, the assigned agent $i\in N$ chooses the edge that corresponds to their reported preference $(v_i,p_i)$. This process continues until reaching a terminal node, which uniquely determines the resulting social act $f\in X^{\Omega}$. 
	 
For any mechanism $G$, we write $\varphi^G$ to denote the SCF  implemented by $G$, where $\varphi^G(v,p)\in X^{\Omega}$ is the social act found via the above process. 
	
Two preferences $(v_i,p_i),(v_i',p_i')\in \mathcal{D}$ for agent $i\in N$ diverge at a node $\rm v\in V(R)$ if there exists two distinct edges $e,e'\in E(\rm v)$ outgoing from $\rm v$ such that $(v_i,p_i)\in \alpha(e)$ and $(v_i',p_i')\in \alpha(e')$. Next, we state the definition of OSP mechanism.

\begin{defn}\rm 
A mechanism $G=<R,\iota,h,\alpha>$ is OSP if for all $i\in N$, all nodes $\rm v\in V(R)$ such that $\iota(\rm v)=i$, and every $(v,p),(v',p')\in \mathcal{D}^n$ such that $(v_i,p_i)$ and $(v_i',p_i')$ diverge at $\rm v$, we have $E^{v_i}_{p_i}(\varphi^G(v,p))\geq E^{v_i}_{p_i}(\varphi^G(v',p'))$. 
\end{defn}
	
\begin{defn}\rm 
An SCF $\varphi$ is OSP implementable if there exists an OSP mechanism $G$ such that $\varphi=\varphi^G$.
\end{defn}

Note that every OSP implementable SCF is strategy-proof.

\section{The Main Result}\label{TMR}
In this section, we present the main result of the paper, which characterizes unanimous and obviously strategy-proof implementable social choice functions on the domain $\mathcal{D}$. We begin with a technical lemma that will play a key role in the proof of the theorem.

\begin{lemma}\rm \label{L1}
Consider any belief $p_i\in \mathcal{P}$. Suppose $X=\{x_1,\ldots,x_m\}$. Consider $v_i\in \mathcal{V}$ such that $v_i(x_1)=1,v_i(x_2)=t,\ldots,v_i(x_{m-1})=t^{m-2},v_i(x_m)=0$, where $t\in (0,1)$. Then, for all but finitely many $t\in (0,1)$, $(v_i,p_i)\in \mathcal{D}$. 
\end{lemma}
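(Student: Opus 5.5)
Membership in $\mathcal{D}$ requires three things: $v_i$ is normalized and injective, $p_i$ is injective, and $E^{v_i}_{p_i}(f)\neq E^{v_i}_{p_i}(g)$ for all distinct acts $f,g$. The first holds for every $t\in(0,1)$, since $1>t>t^2>\dots>t^{m-2}>0$ is a strictly decreasing sequence with maximum $1$ and minimum $0$. The second holds by hypothesis. The plan is therefore to show that, for each fixed pair of distinct acts, equality of expectations happens for at most finitely many $t$. Because $X^{\Omega}$ is finite, there are only finitely many pairs, so the union of these exceptional sets is finite and the lemma follows.

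Fix distinct acts $f,g\in X^{\Omega}$ and set $v_i(x_k)=t^{k-1}$ for $k<m$ and $v_i(x_m)=0$. Then
\[
E^{v_i}_{p_i}(f)-E^{v_i}_{p_i}(g)=\sum_{k=1}^{m-1} c_k\, t^{k-1}=:Q_{f,g}(t),
\]
where $c_k=p_i(f^{-1}(x_k))-p_i(g^{-1}(x_k))$. So $Q_{f,g}$ is a polynomial in $t$ of degree at most $m-2$. If $Q_{f,g}$ is not identically zero, it has at most $m-2$ roots in $(0,1)$, and we are done for this pair.

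The main step is to show $Q_{f,g}\not\equiv 0$, that is, that some coefficient $c_k$ with $k\le m-1$ is nonzero. Suppose all $c_k=0$ for $k=1,\dots,m-1$. Then $p_i(f^{-1}(x_k))=p_i(g^{-1}(x_k))$ for each such $k$, and injectivity of $p_i$ as a map on $2^{\Omega}$ gives $f^{-1}(x_k)=g^{-1}(x_k)$. The sets $f^{-1}(x_m)$ and $g^{-1}(x_m)$ are then the complements of the same union, so they also coincide. Hence $f=g$, a contradiction. This step is where the injectivity of $p_i$ over events, not merely over states, is essential. It is also the only place the argument has real content: the polynomial coefficients must be tied to probabilities of whole preimage events so that injectivity can be applied. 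Everything else is counting.

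Putting the pieces together, the set of bad $t$ is contained in $\bigcup_{f\neq g}\{t\in(0,1):Q_{f,g}(t)=0\}$. This is a finite union of finite sets, so for all other $t\in(0,1)$ we have $(v_i,p_i)\in\mathcal{D}$.
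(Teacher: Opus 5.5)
Your proof is correct and is essentially the paper's argument: the same polynomial $\sum_{k=1}^{m-1}c_k t^{k-1}$, with coefficients equal to differences of probabilities of preimage events, nontriviality from injectivity of $p_i$ on events, and a finite union over the finitely many pairs of acts. The differences are cosmetic. You show some $c_k\neq 0$ by contradiction: the preimages $f^{-1}(x_k)$ and $g^{-1}(x_k)$ would agree for $k\le m-1$, hence also for $k=m$. The paper instead picks a state $\omega^*$ with $f(\omega^*)\neq g(\omega^*)$ and, after possibly swapping $f$ and $g$, directly exhibits an index $k\le m-1$ with $c_{k-1}\neq 0$. You also explicitly check $v_i\in\mathcal{V}$, which the paper leaves implicit.
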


\begin{proof} Fix any two distinct social acts $f,g\in X^{\Omega}$, $f\ne g$. Consider the difference in expected utilities: 
$$\Delta (t)=E^{v_i}_{p_i}(f)- E^{v_i}_{p_i}(g)=\sum_{l=0}^{m-2}c_lt^l,$$ 
where  
$$c_l=\sum_{\{\omega\mid f(\omega)=x_{l+1}\}}p_i(\omega)\;-\sum_{\{\omega\mid g(\omega)=x_{l+1}\}}p_i(\omega).$$ 
First, we claim that $c_l\ne 0$ for some $l\in \{0,\ldots,m-2\}$. Since $f\neq g$, there exists $\omega^*\in \Omega$ such that
\[
f(\omega^*)\neq g(\omega^*).
\]
If $f(\omega^*)=x_m$, then, since $f(\omega^*)\neq g(\omega^*)$, we must have
\[
g(\omega^*)=x_k
\]
for some $k\in \{1,\ldots,m-1\}$.
Interchanging the roles of $f$ and $g$ if necessary, we may therefore assume that
\[
f(\omega^*)=x_k
\]
for some $k\in \{1,\ldots,m-1\}$. Define, $\Omega'=\{\omega'\in \Omega\mid f(\omega')=x_k\}$, $\Omega''=\{\omega'' \in \Omega \mid g(\omega'')=x_k\}$. By construction, $\Omega'\ne \Omega''$ as $\omega^*\in \Omega'$ and $\omega^*\notin \Omega''$. Since $p_i$ is injective,  $p_i(\Omega')\ne p_i(\Omega'')$. Hence,  $c_{k-1}\ne 0$. Thus, $\Delta(t)$ is a non-trivial polynomial of at most $m-2$ degree.
        
A non-trivial polynomial of finite degree has at most finitely many real roots. Thus, $\Delta(t)\ne 0$ for all but finitely many $t\in (0,1)$, meaning $f$ and $g$ are not indifferent for all but finitely many values of $t$.

Since $X^{\Omega}$ is finite, there are only finitely many pairs $(f,g)$ with $f\neq g$. For each such pair, the set of values $t\in (0,1)$ for which
\[
E_{p_i}^{v_i}(f)=E_{p_i}^{v_i}(g)
\]
is finite. Taking the union over all distinct pairs of acts, the set of values $t\in (0,1)$ for which some pair of distinct acts is indifferent remains finite. Hence, for all but finitely many $t\in (0,1)$, every pair of distinct acts is strictly ranked under $(v_i,p_i)$.
\end{proof}

We now state the main result of the paper.

\begin{theorem}\rm 
An SCF $\varphi$ is unanimous and OSP implementable if and only if it is dictatorial.
\end{theorem}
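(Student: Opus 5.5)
The \emph{if} direction is immediate. If agent $i$ is a dictator, consider the mechanism in which agent $i$ moves once at the root and chooses among edges indexed by $x\in X$, with $\alpha(e_x)=\{(v_i,p_i)\in\mathcal{D}:\tau(v_i)=x\}$. Each edge leads to a leaf whose act is the constant act $x$. At the root, the worst outcome from truthful play is the constant act $\tau(v_i)$, which $i$ strictly prefers to every other act. So this mechanism is OSP, and a dictatorial SCF is clearly unanimous.

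For the \emph{only if} direction, let $\varphi$ be unanimous and OSP implementable via $G$. Since OSP implies strategy-proofness, the Theorem of Bahel and Sprumont (2020) tells us that $\varphi$ is a locally bilateral top selection, with partition $\{\Omega^t\}$ and component rules $s^t$. The goal is to show that the associated assignment rule always gives all of $\Omega$ to a single agent. Equivalently, every $s^t$ is constant, and all these constant rules assign to the same agent. We will proceed in three steps.

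\emph{Step 1: bilateral dictatorial and consensual components are impossible.} Suppose some component $s^t$ is $(i,j)$-dictatorial or $(i,j)$-consensual. Then there are belief profiles at which $i$ receives an event $A$ and $j$ receives $\Omega^t\setminus A$, and other profiles at which the roles of these events are changed by $i$'s (or $j$'s) report. Look at the first node $\rm v$ in $G$ at which some agent $k$ takes a nontrivial action, meaning her possible reports diverge in a way that changes the outcome. For consensual and dictatorial components this agent must be $i$ or $j$. We will construct a truthful type $(v_k,p_k)$ and a deviating type $(v_k',p_k')$ that diverge at $\rm v$, together with opponents' profiles $(v_{-k},p_{-k})$ and $(v'_{-k},p'_{-k})$ consistent with reaching $\rm v$, such that the worst case under truth is worse than the best case under deviation.

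Because the opponents' tops are free, the truthful worst case can be made to give $k$ only the event she ``lost'' in the rule, with the other agent's top (her worst outcome $x_m$) on the rest. The deviation's best case can give $k$'s top on a larger or likelier event; alternatively, we can choose the opponents so that their tops coincide with $k$'s top on the states $k$ does not control. Lemma \ref{L1} is the tool for building these types. It lets us pick valuations $v_k$ with $v_k(x_1)=1$, $v_k(x_2)=t$ for $t$ near $1$ (or near $0$) while staying in $\mathcal{D}$. This lets us arrange that $k$ nearly equally likes two outcomes, or strongly prefers her top, so the strict inequality against OSP can be forced. Beliefs are chosen so that the relevant events satisfy the inequalities defining the rule.

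\emph{Step 2: a single constant assignment.} Once every $s^t$ is known to be constant, $\varphi$ gives agent $i$'s top on a fixed event $A_i$. If two agents $i\neq j$ have nonempty $A_i,A_j$, then the first mover $k$ among those with nonempty events faces the following comparison. Her worst case under truth gives the other agent's top, which can be $k$'s worst outcome, on $\Omega\setminus A_k$. But mechanisms implementing constant top selection still require $k$ to report her top. Comparing with a deviation whose best case yields an act that $k$ prefers, chosen via Lemma \ref{L1} to make a second-best outcome almost as good as the top, gives a violation. So a single agent gets all of $\Omega$, and $\varphi$ is dictatorial.

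The main obstacle is Step 1. There, OSP at the \emph{first} divergence node must be exploited while respecting the restriction that the opponents' profiles must be consistent with reaching $\rm v$ on both branches; before $\rm v$ they may already have been partially revealed. I would first try to show that the first-mover node can be taken to be the root after pruning trivial moves. Then I would exploit the fact that at that point the opponents' tops are unrestricted, so that the worst-case and best-case acts can be chosen freely on the states outside $k$'s control.
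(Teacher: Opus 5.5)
\textbf{There is a genuine gap in the only-if direction.} Your \emph{if} direction is correct. Reducing the \emph{only if} direction to locally bilateral top selections via Bahel and Sprumont (2020) is also how the paper begins. The problem is that you never actually produce the OSP contradiction, in either Step 1 or Step 2.

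You describe the intended shape of the argument: a first nontrivial mover, free opponents' tops, and truthful worst case versus deviating best case. But the key premise, that the relevant node can be taken to be the root after pruning trivial moves, is left as something you ``would try to show''. This is not routine.
\begin{itemize}
\item Earlier moves may split types by beliefs, or by the shape of valuations, without fixing anyone's top. In dictatorial and consensual components, belief-revealing moves do affect the outcome.
\item So the opponent profiles consistent with reaching your node, and even the mover's own remaining types, can be restricted in ways you have not controlled.
\item You would still have to exhibit a specific truthful type and a profitable deviating type lying in different edges at that node.
\end{itemize}
Your assertion that the first nontrivial mover must be $i$ or $j$ is also unjustified. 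Nothing prevents the first agent whose move affects the outcome from being one who controls a different component $\Omega^{t'}$.

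\textbf{The idea that closes the gap, which the paper uses, avoids analysing the tree altogether.} Fix a belief profile $p$ at which two agents, say $1$ and $2$, receive nonempty events $A_1,A_2$. Build two profiles $\theta,\theta'$ with the same beliefs $p$ that differ only in the valuations of agents $1$ and $2$:
\begin{itemize}
\item In $\theta$, agents $1$ and $2$ have tops $(a,b)$; in $\theta'$ they have tops $(c,a)$. Every other agent has top $a$ in both.
\item Agent $1$'s type in $\theta$ has valuation $v_1^*$ with $v_1^*(a)=1$, $v_1^*(c)$ close to $1$, and $v_1^*(b)=0$.
\item Agent $2$'s type in $\theta'$ has valuation $v_2^*$ with $v_2^*(a)=1$, $v_2^*(b)$ close to $1$, and $v_2^*(c)=0$.
\item Lemma \ref{L1} keeps all these types in $\mathcal{D}$.
\end{itemize}
Then agent $1$ (type in $\theta$) strictly prefers $\varphi(\theta')$ to $\varphi(\theta)$, because $p_1(A_2)>0$. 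Agent $2$ (type in $\theta'$) strictly prefers $\varphi(\theta)$ to $\varphi(\theta')$, because $p_2(A_1)>0$.

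Since $\varphi(\theta)\neq\varphi(\theta')$, the two paths of $G$ split at some node. That node is owned by $1$ or $2$, because everyone else has the same type in both profiles. Both profiles reach it by construction, so OSP fails for whichever of the two agents moves there.

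This argument uses only that two agents receive nonempty events at one belief profile. It therefore rules out constant, dictatorial and consensual components simultaneously, so your Step 1/Step 2 split is unnecessary. The fact that whether $s_i(p)$ is empty does not depend on $p$ (Claim \ref{C1}) then pins down a single dictator.
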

	
\begin{proof} Let $\varphi$ be a unanimous and OSP implementable SCF. Since every OSP implementable SCF is strategy-proof, it follows from \cite{bahel2020strategyproof} that $\varphi$ is a locally bilateral top-selection SCF. Let $s:\mathcal{P}^n\to \mathcal{S}$ denote the associated locally bilateral assignment rule. That is, there exists a partition $\{\Omega^t\}_{t=1}^{T}$ of $\Omega$ and, for each $t\in \{1,\ldots,T\}$, an $\Omega^t$-assignment rule
\[
s^t:\mathcal{P}(\Omega^t)^n \rightarrow \mathcal{S}(\Omega^t)
\]
which is either constant, bilaterally consensual, or bilaterally dictatorial, such that
\[
s_i(p)=\bigcup_{t=1}^T s_i^t(p|\Omega^t)
\]
for all $p\in \mathcal{P}^n$ and all $i\in N$.

We complete the proof through a sequence of claims. 

\begin{claim}\rm \label{C1}
For every agent $i\in N$, either $s_i(p)\neq \emptyset$ for all $p\in \mathcal{P}^n$, or $s_i(p)=\emptyset$ for all $p\in \mathcal{P}^n$.
\end{claim}
\begin{proof}
Fix an agent $i\in N$. Suppose that $s_i(p)\neq \emptyset$ for some $p\in \mathcal{P}^n$. We show that $s_i(p')\neq \emptyset$ for every $p'\in \mathcal{P}^n$.

Since $s_i(p)\neq \emptyset$, there exists $\omega\in s_i(p)$. As $\{\Omega^t\}_{t=1}^{T}$ is a partition of $\Omega$, there exists $t\in \{1,\ldots,T\}$ such that $\omega\in \Omega^t$. Since
\[
s_i(p)=\bigcup_{r=1}^{T}s_i^r(p|\Omega^r),
\]
we have
\[
\omega\in s_i^t(p|\Omega^t).
\]
In particular,
\[
s_i^t(p|\Omega^t)\neq \emptyset.
\]

Now consider the $\Omega^t$-assignment rule $s^t$. By construction, $s^t$ is either constant, bilaterally consensual, or bilaterally dictatorial. In each of these cases, if agent $i$ receives a nonempty event at one belief profile, then agent $i$ receives a nonempty event at every belief profile. Hence,
\[
s_i^t(p'|\Omega^t)\neq \emptyset
\]
for every $p'\in \mathcal{P}^n$. Therefore,
\[
s_i(p')
=
\bigcup_{r=1}^{T}s_i^r(p'|\Omega^r)
\neq \emptyset
\]
for every $p'\in \mathcal{P}^n$.

Thus, if $s_i(p)\neq \emptyset$ for some $p\in \mathcal{P}^n$, then $s_i(p')\neq \emptyset$ for all $p'\in \mathcal{P}^n$. Equivalently, for every agent $i\in N$, either $s_i(p)\neq \emptyset$ for all $p\in \mathcal{P}^n$, or $s_i(p)=\emptyset$ for all $p\in \mathcal{P}^n$.
\end{proof}

\begin{claim}\rm \label{C2}
There exists an agent $i\in N$ such that $s_i(p)\neq \emptyset$ for all $p\in \mathcal{P}^n$.
\end{claim}

\begin{proof}
Fix any belief profile $p\in \mathcal{P}^n$ and any state $\omega\in \Omega$. Since $s(p)$ is an assignment of $\Omega$, there exists an agent $i\in N$ such that
\[
\omega\in s_i(p).
\]
Thus, $s_i(p)\neq \emptyset$. By Claim \ref{C1}, it follows that
\[
s_i(p')\neq \emptyset
\]
for all $p'\in \mathcal{P}^n$. Hence, there exists an agent $i\in N$ such that $s_i(p)\neq \emptyset$ for all $p\in \mathcal{P}^n$.
\end{proof}
	
\begin{claim}\rm \label{C3}
There exists a unique agent $i\in N$ such that $s_i(p)\neq \emptyset$ for all $p\in \mathcal{P}^n$.
\end{claim}
\begin{proof}
By Claim \ref{C2}, there exists at least one agent $i\in N$ such that
\[
s_i(p)\neq \emptyset
\]
for all $p\in \mathcal{P}^n$. We show that there cannot be more than one such agent.

Suppose, to the contrary, that there are two distinct agents with this property. Without loss of generality, let these agents be $1$ and $2$. Fix any belief profile $p\in \mathcal{P}^n$. Since $s_1(p)\neq \emptyset$ and $s_2(p)\neq \emptyset$, write
\[
A_1=s_1(p), \qquad A_2=s_2(p),
\]
and let
\[
A'=\Omega\setminus (A_1\cup A_2).
\]
Since $s(p)$ is an assignment, $A_1$ and $A_2$ are disjoint. Also, since $A_1,A_2\neq \emptyset$, injectivity of $p_1$ implies
\[
p_1(A_1)>0
\quad\text{and}\quad
p_1(A_2)>0.
\]

Choose three distinct outcomes $a,b,c\in X$. By Lemma \ref{L1}, choose preferences so that the following two profiles belong to $\mathcal{D}^n$. Consider first a profile $(v,p)$ such that $\tau(v_1)=a$, $\tau(v_2)=b$, and $\tau(v_k)=a$ for all $k\in N\setminus\{1,2\}$. Since $\varphi$ is a top-selection SCF, the outcome $f=\varphi(v,p)$ is given by
\[
f(\omega)=
\begin{cases}
a & \text{if } \omega\in A_1\cup A',\\
b & \text{if } \omega\in A_2.
\end{cases}
\]

Next, consider a second profile $(v',p)$ such that $\tau(v_1')=c$, $\tau(v_2')=a$ and $v_k'=v_k$ for all $k\in N\setminus\{1,2\}$. Then $g=\varphi(v',p)$ is given by
\[
g(\omega)=
\begin{cases}
c & \text{if } \omega\in A_1,\\
a & \text{if } \omega\in A_2\cup A'.
\end{cases}
\]
Clearly, $f\neq g$.

We now construct auxiliary valuation functions $v_1^*$ and $v_2^*$ such that $(v_1^*,p_1)$ and $(v_2^*,p_2)$ belong to $\mathcal{D}$ as follows. Let $v_1^*\in \mathcal{V}$ be such that
\[
v_1^*(a)=1,\qquad v_1^*(c)=t,\qquad v_1^*(b)=0,
\]
with the remaining outcomes assigned distinct values between $0$ and $t$ in the form required by Lemma \ref{L1}. For $t\in(0,1)$, we have
\[
E^{v_1^*}_{p_1}(f)=p_1(A_1)+p_1(A')
\]
and
\[
E^{v_1^*}_{p_1}(g)=t\,p_1(A_1)+p_1(A_2)+p_1(A').
\]
Since $p_1(A_2)>0$, we may choose $t\in(0,1)$ sufficiently close to $1$ so that
\[
E^{v_1^*}_{p_1}(f)<E^{v_1^*}_{p_1}(g).
\]
By Lemma \ref{L1}, we may choose such a $t$ so that $(v_1^*,p_1)\in \mathcal{D}$. Moreover, $\tau(v_1^*)=a=\tau(v_1)$.

Similarly, choose $v_2^*\in \mathcal{V}$ such that $\tau(v_2^*)=a=\tau(v_2')$ and
\[
E^{v_2^*}_{p_2}(g)<E^{v_2^*}_{p_2}(f),
\]
with $(v_2^*,p_2)\in \mathcal{D}$.

Since $\varphi$ is top-selection and only peaks matter for the selected acts, we have
\[
\varphi((v_1^*,p_1),(v_2,p_2),(v_{-\{1,2\}},p_{-\{1,2\}}))=f
\]
and
\[
\varphi((v_1',p_1),(v_2^*,p_2),(v'_{-\{1,2\}},p_{-\{1,2\}}))=g.
\]

Let
\[
\theta=((v_1^*,p_1),(v_2,p_2),(v_{-\{1,2\}},p_{-\{1,2\}}))
\]
and
\[
\theta'=((v_1',p_1),(v_2^*,p_2),(v'_{-\{1,2\}},p_{-\{1,2\}})).
\]

Since $\varphi$ is OSP implementable, there exists an OSP mechanism $G$ such that $\varphi=\varphi^G$. The mechanism $G$ assigns $f$ at $\theta$ and $g$ at $\theta'$. Since $f\neq g$, the two induced paths in $G$ must diverge. Let ${\rm v}$ be the node at which these two paths diverge. The profiles $\theta$ and $\theta'$ differ only in the preferences of agents $1$ and $2$. Hence the divergence node ${\rm v}$ cannot be assigned to any agent other than $1$ or $2$. Therefore, $\iota({\rm v})\in\{1,2\}$.

If $\iota({\rm v})=1$, then the two preferences $(v_1^*,p_1)$ and $(v_1',p_1)$ diverge at ${\rm v}$. Since $G$ is OSP, we must have
\[
E^{v_1^*}_{p_1}(\varphi^G(\theta))
\geq
E^{v_1^*}_{p_1}(\varphi^G(\theta')).
\]
That is,
\[
E^{v_1^*}_{p_1}(f)
\geq
E^{v_1^*}_{p_1}(g),
\]
contradicting the construction of $v_1^*$.

If $\iota({\rm v})=2$, then the two preferences $(v_2,p_2)$ and $(v_2^*,p_2)$ diverge at ${\rm v}$. Since $G$ is OSP, we must have
\[
E^{v_2^*}_{p_2}(\varphi^G(\theta'))
\geq
E^{v_2^*}_{p_2}(\varphi^G(\theta)).
\]
That is,
\[
E^{v_2^*}_{p_2}(g)
\geq
E^{v_2^*}_{p_2}(f),
\]
contradicting the construction of $v_2^*$.

Therefore, there exists at most one agent $i\in N$ such that
$s_i(p)\neq \emptyset$, for all $p\in \mathcal{P}^n$. Combining this with Claim \ref{C2}, it follows that there exists a unique such agent.
\end{proof}

By Claim \ref{C3}, there exists a unique agent $i\in N$ such that
$s_i(p)\neq \emptyset$ for all $p\in \mathcal{P}^n$. Then, by Claim \ref{C1}, for every agent $j\neq i$, $s_j(p)=\emptyset$
for all $p\in \mathcal{P}^n$. Since $s(p)$ is an assignment of $\Omega$ for every $p\in \mathcal{P}^n$, it follows that $s_i(p)=\Omega$ for all $p\in \mathcal{P}^n$. Hence, $\varphi$ is dictatorial. Conversely, it is straightforward to verify that every dictatorial SCF is unanimous and can be implemented by an OSP mechanism.
\end{proof}

\section{Conclusion}\label{CN}
We have characterized the class of unanimous social choice functions that admit OSP implementations in the framework of social choice over acts introduced by \cite{bahel2020strategyproof}. While \cite{bahel2020strategyproof} identify a rich class of unanimous and strategy-proof SCFs, our analysis shows that obvious strategy-proofness is far more restrictive in this environment. In particular, only dictatorial SCFs admit OSP implementations. This highlights the strong tension between incentive simplicity and non-dictatorial collective choice in environments with subjective expected utility preferences over social acts.

\bibliographystyle{ecta}
\bibliography{order}

\end{document}